\crefname{appsec}{Appendix}{Appendices}
\crefname{box}{Box}{Box}
\newcommand{\calE}{\mathcal{E}}
\newcommand{\calN}{\mathcal{N}}
\newcommand{\calL}{\mathcal{L}}
\newcommand{\calZ}{\mathcal{Z}}
\newtheorem*{app}{Theorem}
\begin{document}

\title{Broken Detailed Balance and Entropy Production in CPTP Quantum Brownian Motion}
\author{Simone Artini}
\let\comma,
\affiliation{Quantum Theory Group\comma{} Dipartimento di Fisica e Chimica Emilio Segr\`e\comma{} Universit\`a degli Studi di Palermo\comma{} via Archirafi 36\comma{} I-90123 Palermo\comma{} Italy}
\author{Gabriele Lo Monaco}
\let\comma,
\affiliation{Quantum Theory Group\comma{} Dipartimento di Fisica e Chimica Emilio Segr\`e\comma{} Universit\`a degli Studi di Palermo\comma{} via Archirafi 36\comma{} I-90123 Palermo\comma{} Italy}
\author{Alberto Imparato}
\affiliation{Dipartimento di Fisica\comma{} Universit\`a degli Studi di Trieste\comma{} Strada Costiera 11\comma{} 34151 Trieste\comma{} Italy}
\affiliation{ Istituto Nazionale di Fisica Nucleare\comma{} Trieste Section\comma{} Via Valerio 2\comma{} 34127 Trieste\comma{} Italy}

\author{Mauro Paternostro}
\let\comma,
\affiliation{Quantum Theory Group\comma{} Dipartimento di Fisica e Chimica Emilio Segr\`e\comma{} Universit\`a degli Studi di Palermo\comma{} via Archirafi 36\comma{} I-90123 Palermo\comma{} Italy}
\affiliation{Centre for Quantum Materials and Technologies\comma{} School of Mathematics and Physics\comma{} Queen’s University Belfast\comma{} BT7 1NN\comma{} United Kingdom}

\author{Sandro Donadi}
\affiliation{Dipartimento di Ingegneria\comma{} Universit\`a degli Studi di Palermo\comma{} via Archirafi 36\comma{} I-90123 Palermo\comma{} Italy}

\begin{abstract}
We rigorously analyze the non-equilibrium thermodynamic behavior of various formulations of quantum Brownian motion (QBM) using the framework of stochastic thermodynamics. 
While the widely used Caldeira-Leggett master equation exhibits desirable thermodynamic features — such as the fulfilment of a detailed balance — it fails to
ensure complete positivity. In contrast, several completely positive and trace-preserving (CPTP) extensions turn out to be thermodynamically controversial. 
We show that such extensions introduce anomalous phase-space structures that violate detailed balance at the steady state, leading to non-vanishing entropy production and spurious non-equilibrium currents lacking  a clear physical interpretation. Our results highlight a fundamental tension between quantum consistency and thermodynamic equilibration in open quantum systems.
\end{abstract}

\maketitle
 
\section{Introduction}
The theory of Brownian motion is a cornerstone in modern physics. It played a decisive role in the consolidation of the atomic theory of matter~\cite{einstein1905movement} and, from a statistical mechanics point of view, it served as a starting point for the study of continuous Markov processes and their role in describing the mechanism that leads mesoscopic systems to thermal equilibrium, and links it to dissipation \cite{kubo1966fluctuation, onsager1931reciprocal}.
The interaction between a particle and a thermal environment, the description of dissipation and the emergence of irreversibility are also of profound interest in the context of quantum mechanics, both from a fundamental and a technological perspective. The quantum description of Brownian motion was first introduced in the seminal work of Caldeira and Leggett~\cite{caldeira1981influence}. Starting from a model of a bath of harmonic oscillators, they derived a master equation that correctly predicts the relaxation of the system's state to a thermal state at the temperature of the bath. However, the dynamics described by the Caldeira-Leggett (CL) master equation are not of Lindblad form and do not preserve complete positivity (CP) (to be more precise, it does not preserve even simple positivity~\cite{homa2019positivity}). While the violation of CP is compatible with the approximations made in deriving the CL master equation—and thus does not lead to inconsistencies within its regime of validity—it is desirable to formulate an open quantum system dynamics that preserves complete positivity. In order to cast the master equation in Lindblad form, the addition of a term that is small in the high temperature limit is proposed \cite{breuer2002theory}. In order to go beyond arbitrary approaches and corrections, many works expanded on the original CL model either by relaxing the approximations, thus abandoning the Markovian framework~\cite{hu1992quantum, hu1993quantum, vacchini2007relaxation, giovannetti2001phase}, or by implementing phenomenological Lindblad master equations based on global symmetries \cite{vacchini2002quantum}.

In this work, we show that the inclusion of the extra term in the CL model, while ensuring consistency with CP, implies that the steady state of the system is no longer an equilibrium one. In fact, the correction leads to the presence of persistent currents in the steady state, signaling a violation of the detailed balance condition. Thus, enforcing complete positivity introduces anomalous currents that prevent proper thermalization. 
We then generalize our analysis to the most general Gaussian Lindblad master equation that is translationally covariant in space, finding again a violation of detailed balance by the steady state. Therefore, within the framework of Gaussian and Lindblad dynamics, the quantum description of a system coupled to a thermal bath that is invariant under translations fundamentally differs from the classical picture: preserving complete positivity—a genuinely quantum requirement—implies the presence of effective driving mechanisms that do not allow the steady state to be in true equilibrium. Finally, we show that restoring a proper equilibration dynamics is indeed possible by breaking the symmetry thus far imposed, but requires a fine tuning on the parameters of the Lindbladian that depends on the Hamiltonian of the system.

In relation to the existing literature on quantum open systems, our work provides a novel and distinct perspective on the fundamental interplay between dynamical and thermodynamic consistency. While previous studies have addressed the challenges of Lindblad descriptions for quantum systems in thermal environments, their focuses differ significantly from ours. For instance, discussions regarding the steady-state discrepancies at low temperatures or under changing Hamiltonians often rely on enforcing detailed balance (DB) in the Kubo-Martin-Schwinger (KMS) sense to guarantee a Gibbs equilibrium state \cite{stockburger2017thermodynamic}. Conversely, other approaches have explored how standard KMS DB conditions fail when the steady-state is non-diagonal in the energy eigenbasis, prompting the introduction of generalized DB definitions based on hidden time-reversal symmetries for driven-dissipative systems \cite{roberts2021hidden}. Rather than modifying the core definition of detailed balance, our manuscript adheres to the standard, widely accepted formulation of DB for classical markovian dynamics-when consistent with the quantum setting-to rigorously examine the compatibility between the requirement of the dynamics to be CPTP and the steady-state (non-)equilibrium properties. Although recent advancements demonstrate that CPTP and DB can be reconciled by relinquishing spatial translation covariance and meticulously fine-tuning parameters \cite{nicacio2024complete}, our work systematically incorporates these edge cases while characterizing a broader, systemic tension between CPTP and DB.  Moreover, in this framework we also provide a comprehensive discussion on entropy production rates and their role in determining steady-state behaviors which existing literature largely lacks.

The remainder of this manuscript is organized as follows. In \cref{sec:GF}, we establish the general framework of stochastic thermodynamics for both classical and quantum systems, introducing the Wigner representation and formalizing the mathematical relationship between the DB condition, irreversible currents, and entropy production rates. In \cref{sec:results}, we present our main results by applying this framework to general translation-covariant, Gaussian, and CPTP master equations, demonstrating how enforcing complete positivity systematically violates detailed balance in the steady state. \Cref{sec:brokenTC} explores how breaking the spatial translation covariance symmetry can restore proper thermalization, analyzing the strict, Hamiltonian-dependent fine-tuning required for the Lindbladian parameters. Finally, \cref{sec:concl} provides concluding remarks and outlines future research avenues, particularly concerning non-Gaussian and non-Markovian open quantum systems.

\section{General framework}
\label{sec:GF}
There are three, nearly equivalent, mathematical descriptions of Brownian motion: the Langevin equation, which describes the stochastic evolution of the paths of the particle by adding noise to the deterministic equations of motion; stochastic differential equations (SDE), which are 
a mathematical refinement and generalization of the Langevin equation; and the Fokker-Planck equation (FP), which describes the evolution of the probability distribution of the position (and possibly momentum or velocity) of the particle~\cite{gardiner1985handbook}. In this work, we focus on the latter description, and consider the Fokker-Planck equation
\begin{equation} \label{eq: general_FP_main}
    \partial_t P(\bm x,t)= \left(-\nabla\!\cdot\! {\boldsymbol{A}}\boldsymbol{x}
    +\frac{1}{2} \nabla\!\cdot\!\boldsymbol{B}\nabla\right)P(\bm x,t) \,,
\end{equation}
where ${\bm x:=(\bm q_1,...\bm q_N,\bm p_1...\bm p_N)}$, $\boldsymbol{A}$ is the drift matrix and $\boldsymbol{A}\boldsymbol{x}$ the drift vector, which  contains the deterministic motion and friction terms; $\boldsymbol{B}$ is the diffusion matrix, which accounts for the stochastic part of the dynamics. The equilibrium condition is expressed by the notion of micro-reversibility in the steady state, meaning that each transition $\boldsymbol{x} \rightarrow \bm x'$ has the same probability of its time-reversed version. Time-reversal is performed  via a diagonal matrix ${\bm \calE}$ such that ${\bm \calE}\bm x = \tilde{\bm x}$, with $\tilde{x}_i = \epsilon_i x_i,\,\, \epsilon_i = + 1$ for positions and $\epsilon_i = - 1$ for momenta. Assuming the existence of a steady-state solution $P_S(\bm x)$, the micro-reversibility condition holds  
if and only if~\cite{gardiner1985handbook}
\begin{equation} \label{eq: cond_DB_main}
        \begin{aligned}
        &{\bm \calE}{\bm A}{\bm \calE}^T\bm x P_S(\bm x) = -{\bm A}\bm x P_S(\bm x) + {\bm B}\nabla_{\bm x}P_S(\bm x)\,,\\
         &{\bm \calE}{\bm B}{\bm \calE}^T = {\bm B},
    \end{aligned}
\end{equation}
and $P_S(\bm x)$ is said to satisfy the \textit{detailed balance} (DB) condition. A thermodynamic theory of Brownian motion, known as stochastic thermodynamics \cite{seifert2012stochastic, van2013stochastic}, has also been developed by interpreting usual thermodynamic quantities -namely entropy, work and heat- as random variables of the trajectories of the particle. In particular, this approach helped expanding thermodynamics to nonequilibrium systems and shed light onto the emergence of irreversibility and its connection to entropy \cite{seifert2008stochastic, tome2010entropy,Imparato7a}. This approach usually roots itself in the SDE description of Brownian motion, but the FP approach can also be used to find the average of the thermodynamic quantities defined on trajectories. The main quantity of interest is the entropy production rate, which, in many relevant scenarios, uniquely characterizes the distance from equilibrium \cite{campisi2011colloquium}. Starting from the Shannon entropy of the probability distribution $S(t) = -\int \,d\bm x P(\bm x,t)\ln P(\bm x,t)$, one can split the entropy rate of the system as
\begin{equation} \label{eq: phi_and_pi}
    \frac{dS(t)}{dt} = \Pi (t) - \Phi (t),
\end{equation}
where $\Phi (t)$ quantifies the rate at which entropy flows between system and environment, and $\Pi (t)$ is the entropy production rate, which characterizes thermodynamic irreversibility~\cite{LandiPaternostro}. Such splitting is usually performed by separating the drift and diffusion matrices into symmetric and antisymmetric components with respect to time inversion, thus identifying the irreversible and reversible parts of the current, respectively. The entropy production rate is then identified  with the term that is non-negative or null if and only if $P_S$ satisfies the DB condition, provided the diffusion has no odd component with respect to time inversion. This is generally written as 
\begin{equation}\label{eq: Pi}
    \Pi = 2\int \,d\bm x \,\, \frac{J^{I}(\bm x)\,({\bm B}^{I})^{-1}J^{I}(\bm x)}{P(\bm x)}\,,
\end{equation}
$J^{I}(\bm x)$ being the irreversible part of the current. This expression provides the entropy production rate averaged over the ensemble of stochastic individual trajectories of the dynamics, thus establishing a link with well-known fluctuation relations \cite{gardiner1985handbook,seifert2008stochastic,van2013stochastic,seifert2012stochastic,tome2010entropy}. 

In this paper we focus on exploring the relation between the DB condition and the irreversible currents in the stationary state. It is well known that, when the DB conditions hold, the stationary state exists, is unique and the irreversible current vanishes \cite{van1992stochastic}. However, the absence of an irreversible current in the steady state does not imply DB, as shown in the following theorem (we provide a proof in \cref{app: DB theorem}):

\begin{app}
    For a homogeneous Fokker-Planck equation such that 
    ${\mathcal{E}}{\bm B}{\mathcal{E}}^T={\bm B}$, if a stationary solution $P_S(\bm x)$ exists and is such that the irreversible component of the current is zero, then the DB conditions hold. 
\end{app}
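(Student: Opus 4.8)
The plan is to rewrite the stationary Fokker--Planck equation as a continuity equation and then read the detailed-balance conditions of \cref{eq: cond_DB_main} directly off the reversible/irreversible splitting of the stationary current. Casting \cref{eq: general_FP_main} in the form $\partial_t P = -\nabla\cdot\bm{J}$, the current carried by the stationary solution is $\bm{J}_S(\bm{x}) = \bm{A}\bm{x}\,P_S(\bm{x}) - \tfrac{1}{2}\bm{B}\nabla P_S(\bm{x})$, and stationarity reads $\nabla\cdot\bm{J}_S = 0$. Since the second line of \cref{eq: cond_DB_main}, $\mathcal{E}\bm{B}\mathcal{E}^T=\bm{B}$, is one of the hypotheses, the whole task reduces to showing that the vanishing of the \emph{irreversible} part of $\bm{J}_S$ forces the first line of \cref{eq: cond_DB_main}.

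First I would fix the splitting. Under time reversal a matrix transforms as $\bm{M}\mapsto\mathcal{E}\bm{M}\mathcal{E}^T$, so I split the drift matrix into its even and odd parts, $\bm{A}^{I}=\tfrac12(\bm{A}+\mathcal{E}\bm{A}\mathcal{E}^T)$ and $\bm{A}^{R}=\tfrac12(\bm{A}-\mathcal{E}\bm{A}\mathcal{E}^T)$, which generate the irreversible and reversible drift vectors respectively (e.g.\ for the damped oscillator $\bm{A}^{I}$ isolates the friction). The decisive point---and the only place where the hypothesis $\mathcal{E}\bm{B}\mathcal{E}^T=\bm{B}$ is used---is that the diffusion term is even under time reversal, so it contributes entirely to the irreversible current; hence $\bm{J}^{I}_S=\bm{A}^{I}\bm{x}\,P_S-\tfrac12\bm{B}\nabla P_S$ (with $\bm{B}^{I}=\bm{B}$) and $\bm{J}^{R}_S=\bm{A}^{R}\bm{x}\,P_S$. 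I would then check that this $\bm{J}^{I}_S$ is the same object entering the entropy-production functional $\Pi$ of the splitting described around \cref{eq: phi_and_pi}, so that ``irreversible current zero'' in the statement is unambiguous.

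With the splitting in hand the conclusion is purely algebraic: imposing $\bm{J}^{I}_S=0$ gives $(\bm{A}+\mathcal{E}\bm{A}\mathcal{E}^T)\bm{x}\,P_S=\bm{B}\nabla P_S$, which rearranges into $\mathcal{E}\bm{A}\mathcal{E}^T\bm{x}\,P_S=-\bm{A}\bm{x}\,P_S+\bm{B}\nabla P_S$, i.e.\ exactly the first line of \cref{eq: cond_DB_main}; together with the assumed second line this establishes DB. The hard part is therefore not this manipulation but the justification of the decomposition itself: arguing rigorously that the diffusive flux lies wholly in the irreversible current and that the leftover reversible piece $\bm{A}^{R}\bm{x}\,P_S$ is divergence-free (which is forced by $\nabla\cdot\bm{J}_S=0$ once $\bm{J}^{I}_S=0$, and reflects that a purely reversible circulation does not obstruct equilibrium). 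As a consistency check for the homogeneous (linear-drift, Gaussian) setting, I would confirm that $\bm{A}^{I}\bm{C}=-\tfrac12\bm{B}$ for the stationary covariance $\bm{C}$, so that $P_S$ is even under $\mathcal{E}$ and the derived relation genuinely coincides with microscopic reversibility.
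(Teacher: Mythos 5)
Your proof is correct and takes essentially the same route as the paper's: you set the irreversible component $\bm{J}^{I}_S=\bm{A}^{I}\bm{x}\,P_S-\tfrac12\bm{B}^{I}\nabla P_S$ to zero, expand the definition of $\bm{A}^{I}$, and rearrange into the first line of \cref{eq: cond_DB_main}, with the hypothesis $\mathcal{E}\bm{B}\mathcal{E}^T=\bm{B}$ supplying the second line. The only cosmetic difference is that you use the symmetry of $\bm{B}$ at the outset (to identify $\bm{B}^{I}=\bm{B}$) while the paper invokes it in the final step; your extra consistency checks (divergence-free reversible current, covariance relation) are sound but not needed for the argument.
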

These considerations justify the choice of the entropy production rate as the figure of merit to characterize nonequilibrium when the hypothesis of this Theorem is satisfied. Note that just having a zero irreversible current is \emph{not} enough to guarantee DB balance: indeed, having a reversible component in the diffusion matrix, i.e. ${\mathcal{E}}{\bm B}{\mathcal{E}}^T\neq{\bm B}$, leads to a reversible current in the steady state that also drives the system out of equilibrium.

Until now we discussed in the framework of classical thermodynamics, however our goal is to study models describing the Quantum Brownian Motion (QBM)~\cite{caldeira1981influence,breuer2002theory}. Restricting the study to potentials that are at most quadratic in the system's degrees of freedom, and considering only states with positive Wigner function

\begin{equation}
W(q, p) = \frac{1}{\pi\hbar} \int_{-\infty}^{\infty} \langle q+y | \hat{\rho} | q-y \rangle e^{-2ipy/\hbar} dy
\end{equation}

we can apply the same construction of classical continuous Markov processes to build the stochastic thermodynamics of some of the versions of the QBM model. This allows us to adopt the well-established framework from classical stochastic thermodynamics introduced in this Section. In doing so, we avoid having to choose among the many proposals for quantum detailed balance available in the literature \cite{fagnola2009two, alicki1976detailed, agarwal1973open}, none of which has yet been universally accepted as the definitive generalization of the classical notion.

To provide our quantitative assessment, we replace the Shannon entropy $S(t)$ with the Wigner entropy
\begin{equation}
S_{W}(t) = -\int \,d\bm x\, W_{\hat{\rho}}(\bm x,t)\ln(W_{\hat{\rho}}(\bm x,t))  \end{equation}
as the main figure of merit to characterize the thermodynamic properties of the considered models~\cite{brunelli2018experimental,brunelli2016irreversibility, santos2017wigner, santos2018irreversibility, artini2023characterizing, artini2025non}, where $W_{\hat{\rho}}(\bm x,t)$ is the Wigner function associated with the state $\hat{\rho}$ of a system. In \cref{app: wherl and Vn} we provide for comparison a discussion about other choices of entropy in this context and ~Ref.~\cite{colla2021entropy} for a thermodynamic description of the exact CL modelbased instead on the Von Neumann entropy.

Using this framework, we can identify anomalous terms in the FP equation for the Wigner function,  which do not have a counterpart in classical Brownian motion. These anomalous terms stem from the complete positivity requirement, and  affect the thermodynamics of the stationary state. In particular, we are able to demonstrate that all  translation-covariant and completely positive and trace preserving (CPTP) quadratic master equations lead to nonequilibrium steady states characterized by a constant entropy production rate or a reversible nonequilibrium current. As a consequence, thermalization, as intended classically, \emph{i.e.} satisfying the DB in \cref{eq: cond_DB_main}, cannot be achieved witin these models. We point out that the assumptions of the dynamics to be CPTP and translational covariance are very natural and invoke in several notable models: the Joos-Zeh and the Cladeira-Legget models \cite{joos1985emergence,caldeira1981influence} in the gaussian regime and more in general in equation of motion of optomechanical systems \cite{barchielli2015quantum}, the Continuous Spontaneous Collapse \cite{bassi2005energy} and the Diósi-Penrose \cite{diosi1987universal,diosi1989models,penrose1996gravity} models.

\section{Results} 
\label{sec:results}
We start by considering the most general Markovian CPTP and Gaussian dynamics. The requirement of the dynamics being CPTP and Markovian impose the Lindblad structure~\cite{gorini1976completely,lindblad1976generators,breuer2002theory} of the master equation for the statistical operator $\hat{\rho}$, while gaussianity means that the this equation is at most quadratic in $\hat{q}$ and $\hat{p}$. As discussed in detail in Ref.~\cite{vacchini2002quantum}, this  leads to a master equation of the form
\begin{equation}\label{eq: trasl_cov_me}
\begin{aligned}
\frac{d\hat{\rho}}{dt} ={}&
-\frac{i}{\hbar}[\hat H,\hat\rho]
-\frac{i}{2\hbar}\alpha[\{\hat q,\hat p\},\hat\rho]
-\frac{i\eta}{2\hbar}[\hat q,\{\hat p,\hat\rho\}] +\\[6pt]
&-\frac{D_{pp}}{\hbar^2}[\hat q,[\hat q,\hat\rho]]-\frac{D_{qq}}{\hbar^2}[\hat p,[\hat p,\hat\rho]]+\frac{2D_{qp}}{\hbar^2}[\hat q,[\hat p,\hat\rho]]
\, .
\end{aligned}
\end{equation}

Here, $\beta$, $\alpha$, $\eta$ and $D_{qp}$ are free parameters, while $D_{pp}$ and $D_{qq}$ are not independent but are fixed to be
\begin{equation}
    D_{pp}= m\eta/\beta \,,\quad D_{qq} = \biggl(\dfrac{\hbar^2\eta \beta}{16m}+\dfrac{\beta D_{qp}^2}{\eta m}\biggr)\,.
\end{equation}
 In order to get a translation-covariant (TC) dynamics, apart from possible symmetry breaking terms in $H$, we must impose $\alpha = 0$ to cancel the extra Hamiltonian term $[\{\hat q,\hat p\},\hat\rho]$ in~\cref{eq: trasl_cov_me}. We now move to the phase-space Wigner representation of the dynamics using the mapping rules summarized in Appendix \ref{app: conversions}. In this way we obtain the following Fokker-Planck equation:

\begin{equation}\label{eq: TC_FP}
    \begin{aligned}
        &\frac{\partial W_{\hat{\rho}}}{\partial t} = \bigl\{W_{\hat{H}},W_{\hat{\rho}}\bigr\}_{*} + \eta\frac{\partial}{\partial p}(pW_{\hat{\rho}}) + D_{pp}\frac{\partial^2}{\partial p^2}W_{\hat{\rho}}+\\
        &+D_{qq}\frac{\partial^2}{\partial q^2}W_{\hat{\rho}}+2D_{qp}\frac{\partial^2}{\partial p\partial q}W_{\hat{\rho}}\,.
    \end{aligned}
\end{equation}

Note that, in the Gaussian regime we are considering, the Moyal brackets $\{W_{\hat{A}},W_{\hat{B}}\}_{*}$, that originate from the unitary part of the evolution \cite{baker1958formulation, zachos2005quantum}, reduce to the classical Poisson brackets. 

In particular, consider the case of a free particle, for which the stationary solution is unique and is the expected Gibbs state $\hat{\rho}_{th}=\exp(-\beta\hat p^2/2m)/\calZ$ (with $\calZ$ the partition function) with this choice of parameters. It is particularly easy to see that the steady-state cannot satisfy DB for this equation provided $D_{qp}\neq 0$, despite it being the canonical thermal state. To this end, it is sufficient  to consider the diffusion matrix
 \begin{equation}
B=\begin{pmatrix}
  2D_{qq} & 2D_{qp}\\ 
  2D_{qp} &  2D_{pp}
\end{pmatrix}\,,
 \end{equation}
which violates the second condition in \cref{eq: cond_DB_main}, ${\bm \calE}{\bm B}{\bm \calE}^T = {\bm B}$. Indeed, as the diffusion matrix is constant, such requirement may hold only when the off--diagonal elements $D_{qp}$ are null. These elements of $B$ actually contributes to the reversible part of the current, entering the second term of the entropy flux rate:
\begin{equation}\label{eq: entropy flux rate TC}
    \Phi = -\int \,dq\,dp  \,  \biggl[\frac{\eta}{D_{pp}}pJ_p + 2D_{qp}\frac{\partial_q W_{\hat{\rho}}
    \partial_p W_{\hat{\rho}}}{W_{\hat{\rho}}}\biggr]\,,
\end{equation}
found following the procedure highlighted in \cref{app: DB theorem}. The entropy production rate instead reads:
\begin{equation}\label{eq: entropy prod rate TC}
    \Pi{=}\int \,dq\,dp \left[\sum_{k=q,p}\frac{J_k^2}{D_{kk}W_{\hat{\rho}}}\right].
\end{equation}
For the free particle, the irreversible component of the current is $J^I(q,p) = (J_q,J_p)^T$ with
\begin{equation}
\begin{aligned}
    &J_q(q,p)=-D_{qq}\partial_q W\,, \\
    &J_p(q,p)=-\eta p -D_{pp}\partial_p W\,,
\end{aligned}
\end{equation}
 
which go to zero in the steady state. Thus, both the entropy production and entropy flux rates go to zero in the steady state (note that $\partial_q W_{S} = 0$). So, although no dissipation is present, the stationary state is still driven out-of-equilibrium by another physical mechanism, related to the kinetic part of the dynamics, which cannot be captured by the entropic analysis. This is usually associated to dynamical activity in active matter systems and other nonequilibrium phenomena, and is referred to as {\it frenesy} in the literature  (cf. Ref.~\cite{maes2020frenesy, basu2015nonequilibrium} for an overview). Therefore, in order to obtain a master equation that leads the free particle to an equilibrium steady state, it is enough to set $D_{qp}=0$. In this way, one gets the so called 'minimally invasive' extension of the original Caldeira-Leggett master equation \cite{breuer2002theory}, which was introduced as a CPTP version of the latter. 

However, the fact that Eq. (\ref{eq: TC_FP}) with $D_{qp}=0$ leads to correct thermalization is just a peculiarity of the free particle. As we show now, as soon as one adds a potential, a constant entropy production arises in the steady state. Indeed, for this model the steady state is $W_S(q,p)= \frac{1}{2\pi \sqrt{\det V_S}} \exp\left( -\frac{1}{2} (q,p) V_S^{-1} (q,p)^T \right)$, $V_S$ being the covariance matrix. In the case of a harmonic oscillator, the entries of $V_S$ are

\begin{equation}
\small
    \begin{aligned}
        \sigma^2_{q,S} &= <q^2>_{S}= \dfrac{\eta D_{qq}}{ \omega^2} + \dfrac{D_{qq}}{\eta} + \dfrac{D_{pp}}{\eta(m\omega)^2},\,
        \\
        \sigma_{pq,S} &= \frac{1}{2}<\{q,p\}>_{S} = -mD_{qq},\\
        \sigma^2_{p,S} &= <p^2>_{S} = \dfrac{(m\omega)^2D_{qq}}{\eta} + \dfrac{D_{pp}}{\eta}.
    \end{aligned}
\end{equation}
The entropy production remains functionally the same as in \cref{eq: entropy prod rate TC}, but plugging this new steady state in that expression one readily finds: 
\begin{equation}
\label{Piinfty}
    \Pi_S
    =\frac{D_{qq}}{D_{pp}}\eta m^{2}\omega^{2}\,.
\end{equation}
 The physical mechanism that drives the system out-of-equilibrium in this case is a dissipation that arises due to the diffusion term in the $q$ direction that is not balanced by any corresponding friction, thus not satisfying a fluctuation--dissipation relation. The reason why this does not affect the free particle is that in that case the Wigner function associated to the steady state  has no position dependence.

As a final remark, we note that the term that causes the diffusion in the position variable is exactly the term introduced to make the Caldeira-Leggett master equation CPTP. If one were to remove this term, giving up on the CPTP requirement, one would recover the standard Klein-Kramers equation describing the phase-space dynamics of a classical Brownian particle \cite{kramers1940brownian}, which is thus known to have the desired thermodynamic properties \cite{tome2010entropy}. 

\section{Breaking the TC assumption}
\label{sec:brokenTC}
In fact, it is indeed possible to get a CPTP model that satisfies DB if one is allowed to include in the CPTP Caldeira-Legget model an Hamiltonian term breaking the TC symmetry, arising from the interaction with the environment \cite{chen2025thermodynamically, nicacio2024complete}. This amounts to setting $D_{qp}=0$ and $\alpha \neq 0$ in \cref{eq: trasl_cov_me}. The new term leads to a squeezing of the state during the dynamics, which in general will lead to a steady--state with non--zero correlation. However, we will show that the squeezing must be fine-tuned to compensate the position diffusion term to achieve equilibrium. This makes the presence of the squeezing term in the Hamiltonian quite {\it ad-hoc}. Considering again the harmonic oscillator, one has the following drift and diffusion matrix:
\begin{equation}\label{eq:A B matrices}
    A=\begin{pmatrix}
  \alpha & \frac{1}{m}\\ 
  -m\omega^2 & -(\alpha +\eta)
\end{pmatrix}\,,\,\,\,\,\,\,\\\,B=\begin{pmatrix}
  2D_{qq} & 0\\ 
  0 &  2D_{pp}
\end{pmatrix}\,.
\end{equation}
Using these in \cref{eq: cond_DB_main}, we find that the steady-state Wigner function needs to be of the following form to satisfy the DB property:
\begin{equation}\label{eq: alfa dist}
    W_S(q,p)=\calN \exp\biggl( \frac{\alpha}{2D_{qq}} q^2 -\frac{\eta+\alpha}{2D_{pp}}p^2\biggr)\,,
\end{equation}
from which immediately follows $\alpha<0$ and $|\alpha|<\eta$ as conditions for $W_S$ to be normalizable. Indeed, substituting \cref{eq: alfa dist} in the stationary Fokker-Planck equation one finds that the $\alpha$ that solve the stationary problem is
\begin{equation}
    \alpha_{DB} = -\eta\frac{D_{qq}m^2\omega^2}{D_{pp}+D_{qq}m^2\omega^2}\,,
\end{equation}
So, in summary given this dynamics, with $\alpha=\alpha_{DB}$, the steady--state satisfies DB and has the following inverse covariance matrix 
\begin{equation}
V_{S}=\begin{pmatrix}
   -\frac{D_{qq}}{\alpha_{DB}}& 0\\ 
  0 &\frac{D_{pp}}{\eta + \alpha_{DB}}
\end{pmatrix}.
\end{equation}

We want to emphasize how  $\alpha_{DB}$ need a subtle fine tuning to guarantee DB in the steady--state. In particular, with respect to the usual fluctuation--dissipation relation $D_{pp}=m\eta/\beta$, $\alpha_{DB}$ contains a dependence on the oscillator's frequency $\omega$.

Another way to restore DB in the CPTP Caldeira-Leggett model is to include a friction term in the $q$ direction $\frac{i\xi}{2\hbar}[\hat p,\{\hat q, \hat \rho\}]$ (equivalent to a drift term $\xi \partial_q(qW(q,p))$ in phase-space) to balance the anomalous diffusion. This is a natural choice as it mimic the standard friction term in the $p$ direction, and it keeps the dynamics CPTP, as shown in Ref. \cite{nicacio2024complete}. However, it is easy to show that this case is not really different from the one we just studied. To see this, we note that this extra term can be rewritten as 
\begin{equation}
    \frac{i\xi}{2\hbar}[\hat p,\{\hat q,\hat\rho\}]
=
\frac{i\xi}{2\hbar}[\{\hat q,\hat p\},\hat\rho]
-
\frac{i\xi}{2\hbar}[\hat q,\{\hat p,\hat\rho\}],
\end{equation}
which means that it can be absorbed in the Hamiltonian term proportional to $\alpha$ and in the standard friction term proportional to $\eta$ (respectively the second and the third terms in the RHS of \eqref{eq: trasl_cov_me}). Also here for DB to be satisfied the parameter $\xi$ needs to be fine tuned to the system and bath parameters.

\section{Conclusions} 
\label{sec:concl}
We have shown that, in general, a Markovian and Gaussian Master equation describing QBM cannot lead to proper thermalization in the sense that the DB is always violated, if translation covariance in space of the interaction with the bath is assumed. This is a relevant conclusion as it affects many phenomenological model based on these master equations. 

This lack of thermodynamic equilibrium is due to the effects of anomalous currents in phase-space which are necessarily there when CPTP of the dynamics is imposed that drive the system out of equilibrium, either via an indefinite heating mechanism (no dissipation) or in light of a dynamical, \emph{i.e.} time-reversible, nonequilibrium contribution. We have also shown how breaking the TC symmetry may allow for master equations with an associated equilibrium state that satisfies micro--reversibility and characterized the parameters of the model in this case. However, this can be achieved only by fine tuning the parameters of the model, with respect with the Hamiltonian of the system. In this sense this added symmetry--breaking term is quite \emph{ad-hoc} and should not be regarded as a physically natural solution of the problem.

Future extensions of this work could investigate how these conclusions are affected by models that depart from the Markovian and Gaussian assumptions adopted here. Exploring non-Gaussian dynamics might be useful in the derivation of a QBM theory that is consistent from both a thermodynamic and quantum mechanical point of view. However, in order to study such models, one would need to resort to the fully quantum generalization of DB~\cite{carmichael1976detailed}, and  generalize the construction of the entropy production rate to states with negative Wigner functions as well. This could be done, for example, using the Wehrl entropy \cite{zicari2023role, artini2025non}, which is always well defined, thus fully leaving the classical framework (see Appendix \ref{app: wherl and Vn}). The assessment of non-Markovian effects is a second important line of investigation. A potential approach would be to consider non-Markovian master equations derived from the microscopic dynamics without approximation, such as the one introduced by  Hu, Paz and Zhang~\cite{hu1992quantum}, whose implications in the context addressed in this work remain fully unexplored.

\textit{Acknowledgements -}
We acknowledge support from
the European Union’s Horizon Europe EIC-Pathfinder
project QuCoM (101046973), the Royal Society Wolfson Fellowship (RSWF/R3/183013), the UK EPSRC (EP/X021505/1), the Department for the Economy of Northern Ireland under the US-Ireland R\&D Partnership Programme, the ``Italian National Quantum Science and Technology Institute (NQSTI)" (PE0000023) - SPOKE 2 through project ASpEQCt,  the Italian Ministry of University and Research under PNRR - M4C2-I1.3 Project PE-00000019 "HEAL ITALIA" (CUP B73C22001250006) and the PNRR Project QUANTIP – Partenariato Esteso NQSTI – PE00000023 – Spoke
9 (CUP E63C22002180006). SD acknowledges support from Istituto Nazionale di Fisica Nucleare (INFN).

\appendix

\section{Detailed Balance and stochastic thermodynamics}\label{app: DB theorem}
We construct the entropy production rate starting from a general Fokker-Planck equation and leveraging arguments based on time-reversal symmetries. We also show the connection between this quantity and micro-reversibility and identify the cases when it ceases to be the sole quantity that characterizes deviations from equilibrium.

Consider the general time-homogeneous Fokker-Planck equation for the evolution of the probability density function of a $N-$dimensional Markovian stochastic process without jumps
\begin{equation} \label{eq: general_FP}
 \partial_t P(\bm x,t)= \left(-\nabla\!\cdot\! {{\boldsymbol{A}}}(\boldsymbol{x})
    +\frac{1}{2} \nabla\!\cdot\!{\bm B}(\bm x)\nabla\right)P(\bm x,t) \,,
\end{equation}
where ${\bm A}(\bm x)$ is the drift vector and ${\bm B}(\bm x)$ is the diffusion matrix and assume there exists a  stationary solution $P_S(\bm x)$ for such a process. Assume that these variables transform under the action of the time-reversal operator ${\bm \calE}= \textit{diag}(\epsilon_1,...,\epsilon_N)$ as $x_i \rightarrow \epsilon_i x_i$, $\epsilon_i = \pm 1$, $i=1,..,N$. Typically, one deals with position and momentum variables for which $\epsilon_i = 1$ and $\epsilon_j = -1$, respectively. Then, the stationary state is said to satisfy the DB \textit{condition} if the micro-reversibility relation holds
\begin{equation}\label{eq: def_DB}
    P(\bm x,\tau|\bm x',0)P_S(\bm x') = P(\bm \calE\bm x',\tau|\bm \calE \bm x,0)P_S(\bm x) \,,
\end{equation}
$P(\bm x,\tau|\bm x',0)$ being the conditional probability of finding the system in the state $x$ at time $t$, given it was observed in position $x'$ at the initial time. It can be shown (cf. Ref.~\cite{gardiner1985handbook}, for instance), that the necessary and sufficient conditions for DB to hold are
\begin{equation} \label{eq: cond_DB}
    \begin{cases}
        & \!\!\!\!\!\!\!\!{\bm \calE}{\bm A}(\bm \calE\bm x)P_S(\bm x)=-{\bm A}(\bm x)P_S(\bm x)+\nabla_{\bm x}({\bm B}(\bm x)P_S(\bm x)),\\
        & \!\!\!\!\!\!\!\! {\bm \calE} {\bm B}(\bm \calE \bm x){\bm \calE}^T={\bm B}(\bm x).
    \end{cases}
\end{equation}
In the following, we restrict to the case of  ({\it i}) homogeneous diffusion matrix, {\it i.e.} ${\bm B}(\bm x)= \bm B$, with $\bm B$ a constant symmetric matrix and ({\it ii}) drift vector linear in the coordinates, ${\bm A}(\bm x) = \bm A\bm x$, the drift matrix $\bm A$ being constant.
To get a thermodynamic description of the system, we start by writing \cref{eq: general_FP} as a continuity equation. We have
\begin{equation} \label{eq: continuity}
\begin{aligned}
    & \partial_t P(\bm x,t) = -\nabla_{\bm{x}}\cdot J(\bm{x},t),\\
    & J(\bm x,t) = {\bm A}\bm x P(\bm x,t)-\frac{1}{2}{\bm B}\nabla_{\bm x}P(\bm x,t).
\end{aligned}
\end{equation}
We can split the current into its time-symmetric component ${\bm \calE} J^{I}(\calE \bm x) = J^{I}( \bm x)$ and  anti-symmetric one ${\bm \calE} J^{R}(\calE \bm x) = -J^{R}(\bm x)$ with $J( \bm x)=J^{R}( \bm x)+J^{I}( \bm x)$. We also define the reversible and irreversible components of a matrix $\bm M$ as ${\bm M}^{R} = \frac{1}{2}({\bm M}-{\bm \calE}{\bm M}{\bm \calE}^T)$ and ${\bm M}^{I} = \frac{1}{2}({\bm M}+{\bm \calE}{\bm M}{\bm \calE}^T)$. Note that by construction $\bm M^I$ is always block diagonal and $\bm M^R$ is always block anti-diagonal. 

Using this notation, the two components of the current are
\begin{equation} \label{eq: currents}
    \left\{\begin{matrix}
        J^{R}(\bm x,t) = {\bm A}^{R}\bm x P(\bm x,t)-\frac{1}{2}{\bm B}^{R}\nabla_{\bm x}P(\bm x,t) \,,\\
        J^{I}(\bm x,t) = {\bm A}^{I}\bm x P(\bm x,t)-\frac{1}{2}{\bm B}^{I}\nabla_{\bm x}P(\bm x,t) \,.
    \end{matrix}\right.
\end{equation}
\newline
We are interested in the rate $\dot S(t)$ of the Shannon entropy of the system, where $S(t) = -\int \,d\bm x P(\bm x,t)\ln(P(\bm x,t))$. Using \cref{eq: continuity} and integrating by parts (assuming vanishing probability distributions and currents at spatial infinity) we have
\begin{align}
     \frac{dS}{dt} &= -\int d\bm x J(\bm x)\cdot \nabla_{\bm x}\ln(P(\bm x)) \\
    &\! =  -\!\int d\bm x J^{R}(\bm x)\cdot\frac{\nabla_{\bm x}P(\bm x)}{P(\bm x)}-\int d\bm x  J^{I}(\bm x)\cdot\frac{\nabla_{\bm x}P(\bm x)}{P(\bm x)},\nonumber
\end{align}
where, on the right-hand side, the dependence on $t$ is left implicit. The reversible part of the drift matrix does not actually contribute to determining the Shannon entropy rate. In fact, inserting the definition of $J^R(\boldsymbol{x})$ in the first integral, we can observe that:
\begin{equation}
    \int d \boldsymbol{x}\nabla_{\boldsymbol{x}}P(\boldsymbol{x})\boldsymbol{A}^R\boldsymbol{x}=-\Tr \boldsymbol{A}^R=0\,.
\end{equation}
Here, after integrating by part, we leveraged that $\partial_ix_j=\delta_{ij}$ and that the reversible drift matrix is traceless by construction.
In the second integral, we perform the substitution $\nabla_{\bm x}P = 2P({\bm B}^{I})^{-1}{\bm A}^{I}\bm x - 2({\bm B}^{I})^{-1}J^{I}$ (which follows from the second equation in \eqref{eq: currents}), finally getting the following expression of the rate of the Shannon entropy
\begin{widetext}
    \begin{equation}\label{eq: general entropy}
\begin{aligned}
     \frac{dS}{dt} = &-2\int \,d\bm x \, ({\bm B}^{I})^{-1}{\bm A}^{I}\bm x\cdot J^{I}(\bm x) +\frac{1}{2}\int \,d\bm x \,\, \frac{\nabla_{\bm x}P(\bm x){\bm B}^{R}\nabla_{\bm x}P(\bm x)}{P(\bm x)}+ 2\int \,d\bm x \,\, \frac{J^{I}(\bm x)\,({\bm B}^{I})^{-1}J^{I}(\bm x)}{P(\bm x)}\,.
\end{aligned}
\end{equation}
\end{widetext}

We now claim and prove the following theorem concerning the relation between micro-reversibility and the irreversible component of the current.

\begin{app}
For a homogeneous Fokker-Planck equation such that 
    ${\mathcal{E}}{\bm B}{\mathcal{E}}^T={\bm B}$, if a stationary solution $P_S(\bm x)$ exists and is such that the irreversible component of the current is zero, then the DB conditions hold. 
\end{app}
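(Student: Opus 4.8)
The plan is to reduce everything to the Gardiner criterion recalled in \cref{eq: cond_DB}, which for a stationary $P_S$ states that detailed balance holds if and only if the two displayed conditions are satisfied. In the present homogeneous, linear-drift setting the second of these reads $\mathcal{E}\bm{B}\mathcal{E}^T=\bm{B}$ and is therefore exactly the hypothesis of the theorem. Consequently the only thing left to establish is the first condition, $\mathcal{E}\bm{A}\mathcal{E}^T\bm{x}\,P_S=-\bm{A}\bm{x}\,P_S+\bm{B}\nabla P_S$, and I would derive it directly from the vanishing of the irreversible current.

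First I would extract the consequences of the hypothesis for the diffusion: $\mathcal{E}\bm{B}\mathcal{E}^T=\bm{B}$ is equivalent to $\bm{B}^R=\tfrac12(\bm{B}-\mathcal{E}\bm{B}\mathcal{E}^T)=0$ and $\bm{B}^I=\bm{B}$. Inserting this into the expression for $J^I$ in \cref{eq: currents} gives $J^I=\bm{A}^I\bm{x}\,P_S-\tfrac12\bm{B}\nabla P_S$, so the assumption that the irreversible current vanishes is precisely the identity $2\bm{A}^I\bm{x}\,P_S=\bm{B}\nabla P_S$. Next I would decompose the drift as $\bm{A}=\bm{A}^R+\bm{A}^I$ and use the parities $\mathcal{E}\bm{A}^R\mathcal{E}^T=-\bm{A}^R$ and $\mathcal{E}\bm{A}^I\mathcal{E}^T=\bm{A}^I$, which yield $\mathcal{E}\bm{A}\mathcal{E}^T=\bm{A}^I-\bm{A}^R$. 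Substituting this into the first Gardiner condition, the two $\bm{A}^R$ contributions cancel between the left- and right-hand sides and one is left with exactly $2\bm{A}^I\bm{x}\,P_S=\bm{B}\nabla P_S$. Since this is nothing but the statement $J^I=0$, the first condition holds, and together with the hypothesis both conditions of \cref{eq: cond_DB} are satisfied, so detailed balance follows.

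The step demanding the most care is recognizing where the hypothesis $\mathcal{E}\bm{B}\mathcal{E}^T=\bm{B}$ is indispensable, rather than the algebra itself, which is a short verification. The clean cancellation of the $\bm{A}^R$ terms and the coincidence of the first Gardiner condition with $J^I=0$ both hinge on $\bm{B}^R=0$: if a reversible diffusion component were present, the reversible current would carry an extra $-\tfrac12\bm{B}^R\nabla P_S$ contribution and the first condition would no longer collapse onto the vanishing of $J^I$. This is exactly the loophole that the later sections exploit to build stationary states with zero irreversible current, and hence zero entropy production, that nevertheless fail detailed balance. A secondary point worth stating explicitly is the passage from a vanishing steady-state entropy production rate to the pointwise identity $J^I(\bm{x})=0$: since the production term in \cref{eq: general entropy} is a quadratic form in $J^I$ with the positive-definite kernel $(\bm{B}^I)^{-1}$ weighted by $P_S>0$, its vanishing forces $J^I$ to vanish everywhere, which is what licenses the identification used above.
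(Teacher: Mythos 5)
Your proof is correct and follows essentially the same route as the paper's: both reduce detailed balance to the two Gardiner conditions of \cref{eq: cond_DB}, identify the hypothesis ${\bm \calE}{\bm B}{\bm \calE}^T={\bm B}$ with the second condition, and show via the reversible/irreversible decomposition of ${\bm A}$ and ${\bm B}$ that the first condition is algebraically equivalent to $J^I_S=0$. The only (immaterial) difference is direction of reading—the paper expands $J^I_S=0$ and rearranges it into the first condition, while you substitute ${\bm \calE}{\bm A}{\bm \calE}^T={\bm A}^I-{\bm A}^R$ into that condition and watch it collapse onto $J^I_S=0$.
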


\begin{proof}
    Using the definition of the irreversible current \begin{align}
        J_S^{I} = 0 
        \iff 
        {\bm A}^{I}\bm x P_S(\bm x)-\frac{1}{2}&{\bm B}^{I}\nabla_{\bm x}P_S(\bm x) = 0 \\
       &\iff \nonumber \\
        ({\bm A} + {\bm \calE}{\bm A}{\bm \calE}^T)\bm x P_S(\bm x)-\frac{1}{2}({\bm B} &+ {\bm \calE}{\bm B}{\bm \calE}^T)\nabla_{\bm x}P_S(\bm x) = 0 \nonumber\\
   &\iff \nonumber\\
         {\bm \calE}{\bm A}{\bm \calE}^T\bm x P_S(\bm x)=-{\bm A}\bm x P_S(\bm x)&+\frac{1}{2}({\bm B} + {\bm \calE}{\bm B}{\bm \calE}^T)\nabla_{\bm x}P_S(\bm x) \nonumber
    \end{align}
    Using the assumption ${\bm B} = {\bm \calE}{\bm B}{\bm \calE}^T$, which is the second line of the DB conditions written in the matrix form as in \cref{eq: cond_DB}, we recover exactly the first line of the same set of conditions, proving the statement.
\end{proof}

Assuming that the diffusion matrix has no reversible component, the second term in the RHS side of \cref{eq: general entropy} disappears, while the other two terms are both zero if and only if the micro-reversibility conditions \cref{eq: cond_DB} holds true, as stated by the Theorem. Since the last one is always non-negative \footnote{A sufficient condition is that ${\bm B}^{I}$ is positive semidefinite. This is the case for physical systems since for $N$ particles $B^I$ is block diagonal with positive semidefinie blocks to ensure non-negative diffusion coefficients.}, we can thus consider it as a measure of the distance from equilibrium and we call it entropy production rate, here denoted as $\Pi$, in analogy with the usual Second law of Thermodynamics. This identification is also grounded in Fluctuation Theorems, which link this quantity to the average of the entropy production rate of stochastic trajectories \cite{seifert2012stochastic, santos2017wigner}. The remainder of the right-hand side of \cref{eq: general entropy} is thus identified with the opposite of the entropy flux rate between the system and the environment, which can be positive or negative, called $\Phi$ in our notation. In summary, we are able to write the rate of entropy of the system in terms of two contributions:
\begin{equation}
    \frac{dS}{dt}(t) = \Pi (t) - \Phi (t) \,,
\end{equation}
one of which, $\Pi$, is identified with the global entropy production rate. Note that at the stationary state $\Phi = \Pi$, but each contribution is equal to zero if and only if the DB condition holds \footnote{DB implies $\Phi = 0$, but the converse is not true.}, meaning that the system is in equilibrium. When this condition is not verified, the system is in a non-equilibrium steady state (NESS), characterized by a constant entropy production (and flux) rate $\Pi_{\infty}>0$.

On the other hand, if there is a reversible component in the diffusion term, a zero irreversible current in the steady state does not imply the detailed balance conditions, meaning that in this scenario the dissipation quantified by the entropy production rate is not the only physical mechanism keeping the system out of equilibrium. Indeed, in this scenario an out-of-equilibrium steady state with $\Pi = 0$ can be found, as shown for the Translation-covariant master equation of a free particle in the main text (\cref{eq: trasl_cov_me}). In the context of classical statistical mechanics, these reversible, cross-diffusion terms can arise due to an anisotropy in the Brownian particle, e.g. an ellipsoidal particle \cite{han2006brownian}, or due to an anisotropic environment, and they are also found in plasma physics \cite{dougherty1964model}.

\section{Going from the master equation to the Fokker-Planck equation} \label{app: conversions}

We report in the table the conversion rules for the most common quadratic terms when passing from the master-equation picture to the phase-space representation.
\begin{table}[h]\label{tab: transl}
\centering
\renewcommand{\arraystretch}{2}
\begin{tabular}{|c|c|}
\hline
\textbf{Master Equation}               & \textbf{Fokker-Planck Equation}               \\ \hline
$-\frac{i}{\hbar}[\hat{H},\hat{\rho}]$ & $\bigl\{W_{\hat{H}},W_{\hat{\rho}}\bigr\}_*$ \\ \hline 
$[\hat{q},\{\hat{p},\hat{\rho}\}]$     & $2i\hbar \partial_p (pW_{\hat{\rho}})$        \\ \hline
$[\hat{p},\{\hat{q},\hat{\rho}\}]$     & $-2i\hbar \partial_q (qW_{\hat{\rho}})$       \\ \hline
$[\hat{p},[\hat{p},\hat{\rho}]]$       & $-\hbar^2\partial_q^2W_{\hat{\rho}}$           \\ \hline
$[\hat{q},[\hat{q},\hat{\rho}]]$       & $-\hbar^2\partial_p^2W_{\hat{\rho}}$           \\ \hline
$[\hat{q},[\hat{p},\hat{\rho}]]$ or $[\hat{p},[\hat{q},\hat{\rho}]]$    & $\hbar^2\partial_q \partial_p$                                         \\ \hline 
\end{tabular}
\end{table}

with the Moyal parenthesis $\{\cdot ,\cdot\}_{*}$ in the first line defined as:
\begin{equation}
\{A,B\}_{*}
=
\frac{2}{\hbar}\,
A(q,p)\,
\sin\!\left[
\frac{\hbar}{2}
\left(
\overleftarrow{\partial}_{q}\overrightarrow{\partial}_{p}
-
\overleftarrow{\partial}_{p}\overrightarrow{\partial}_{q}
\right)
\right]
B(q,p).    
\end{equation}

\section{On different choices of entropy production}
\label{app: wherl and Vn}

In the following, we address two other definitions of entropy production. The first one, the one that is derived from the Wehrl entropy, is shown to be equivalent to the Wigner entropy in this context. The second one, Spohn's entropy production, is argued to be inadequate at addressing the entropy of quantum brownian particles. This analysis solidifies our adoption of the Wigner entropy in the main text.

We first focus on the $Q-$representation and the Wehrl entropy defined as: $S_{Q} = -\iint dq\,dp\, Q(q,p)\ln(Q(q,p))$, where $Q(\alpha) = \frac{1}{\pi}\bra{\alpha}\hat \rho\ket{\alpha}$. The Wigner function and the $Q-$representation are connected via a convolution with the Gaussian with minimal variance allowed by the Uncertainty principle \cite{walls2008quantum}:
\begin{equation}
    \begin{aligned}
        & Q(q,p) = \iint dq'\,dp'\, W(q',p') G(q-q',p-p') \,,\\
        & G(x,y) = \frac{1}{\hbar \pi}\exp(-\frac{1}{\hbar}(m\omega x^2+p^2/m\omega))\,.
    \end{aligned}
\end{equation}
Thus, given a QFP equation in terms of the Wigner function of the form $\partial_t W = \calL^W [W]$, we get the corresponding equation of motion in the $Q-$representation just by applying the convolution on both sides:
\begin{equation*}
        \partial_t Q = \calL^Q [Q] = \iint dq'\,dp'\, \calL^W [W(q',p')] G(q-q',p-p')\,.
\end{equation*}
It is easy to prove that for all terms in the Liouvillian $\calL^W$ that are derivatives of $W$ alone \footnote{Considering an harmonic oscillator this holds for the unitary part in particular.}, it is enough to substitute $W\rightarrow Q$ in the expression, whereas the friction term yields: 
\begin{equation}
        \frac{\partial (p W)}{\partial p}\rightarrow \frac{\partial}{\partial p} \biggl(p + \frac{\hbar m\omega}{2}\frac{\partial}{\partial p}\biggr)Q(q,p) \,.
\end{equation}
Therefore, moving to the $Q-$representation of the QFP equations considered in this Letter amounts to shifting the momentum diffusion constant by $ \dfrac{\hbar m\omega\eta}{2}$. So, for example, the QFP equation with translational-covariant dissipator (Eq. \eqref{eq: trasl_cov_me} and Eq. \eqref{eq: TC_FP} in the main  text) for $Q$ considering at most quadratic Hamiltonians:
\begin{equation*}
\begin{aligned}
    &  \frac{\partial Q_{\hat{\rho}}}{\partial t} = \bigl\{Q_{\hat{H}},Q_{\hat{\rho}}\bigr\}_{PB} + \eta\frac{\partial(p\,Q_{\hat{\rho}})}{\partial p} +\\
    & + \biggl(D_{pp}+\frac{\hbar m\omega\eta}{2}\biggr)\frac{\partial^2Q_{\hat{\rho}}}{\partial p^2}+D_{qq}\frac{\partial^2 Q_{\hat{\rho}}}{\partial q^2} + 2D_{pq}\frac{\partial^2 Q_{\hat{\rho}}}{\partial p \partial q}\,,
\end{aligned}
\end{equation*}
where $Q_{\hat{H}}$ is the Hamiltonian in the $Q-$representation and $\{\cdot,\cdot\}_{PB}$ are the classical Poisson brackets. Considering the CL model (putting $D_{qq}=D_{qp}=0$), allow for a simple discussion about the DB in this picture. At first glance, it would seem that with the new diffusion constant $\tilde D_{pp}=D_{pp}+\frac{\hbar m\omega\eta}{2}$ together with Einstein relation $D_{pp} = \frac{m\eta}{\beta}$, that usually ensures DB, the conditions \cref{eq: cond_DB_main} are violated if one naively substitutes $W\rightarrow Q$. This apparent inconsistency arises because one needs to express those conditions in the $Q$-representation as well. By doing so, one finds that the second line in \cref{eq: cond_DB_main} remains unchanged, while the first one becomes (for this model $A$ and $B$ are given in \cref{eq:A B matrices} with $\alpha=D_{qq}=0$):
\begin{equation}\label{eq: QDB}
    -\eta \biggl(p+\frac{\hbar m\omega}{2}\frac{\partial}{\partial_p}\biggr)Q_S = D_{pp}\frac{\partial}{\partial_p}Q_S \,.
\end{equation}
The distribution satisfying \cref{eq: QDB} and solving the stationary problem of the FP equation is (considering again the harmonic oscillator):
\begin{equation}
    Q_{S}(q,p) = \calN \exp(-\biggl(\frac{p^2}{\frac{2m}{\beta} + \hbar m\omega} + \frac{m^2\omega^2 q^2}{\frac{2m}{\beta} + \hbar m\omega} \biggl))\,,
\end{equation}
which is exactly the $Q-$function of the desired equilibrium Boltzmann distribution. Furthermore, having the two QFP equations the exact same functional structure, the framework used to define the entropy production and flux rates can be carried out in the exact same way, leading to the same conclusions regarding the thermodynamic behavior of the system.

This discussion seems to suggest that the classical DB conditions need to be adapted depending on which mathematical description of the quantum dynamics is chosen, with the exception of the Wigner picture which retains the same structure of the classical case, as long as Gaussian and Markovian dynamics are considered.

Finally, we discuss the choice of Von Neumann entropy as the main figure of merit and the Spohn's definition of entropy production \cite{spohn1978entropy} that usually follows from this choice. The argument against the adoption of this definition is the so called 'ultracold catastrophe', which is a divergence of the entropy production in the limit of a zero temperature heat bath \cite{uzdin2021passivity}. This non-physical feature suggests that a different definition of entropy is necessary to tackle brownian particles in the quantum regime.

\section{Relation with the KMS Detailed Balance}\label{app: KMS}

There is no universally accepted quantum generalization of Detailed Balance and thus different definitions maybe found in the literature \cite{fagnola2009two}. One of the most widely used is the Kubo-Martini-Schwinger (KMS) Detailed Balance, also thanks to its physical connection to the Petz recovery map \cite{scandi2025thermalization, fagnola2010generators}. This definition is based on the properties of the dual of the generator of the dynamics with respect to the inner product on the algebra of observables:
\begin{equation}
    \langle A,B \rangle_{\Sigma} = \Tr[\sqrt{\Sigma} A^\dagger \sqrt{\Sigma} B]\,,
\end{equation}
where $\Sigma$ is a suited reference state. This is the definition of DB used in Refs. \cite{chen2025thermodynamically, nicacio2024complete, stockburger2017thermodynamic} for deriving thermodinamically consistent Gaussian GKSL equations. 

One can prove that the KMS DB condition can be written in terms of the the drift matrix $A$, the diffusion matrix $B$ and the steady state's covariance matrix $V_{\Sigma}$ as:
\begin{equation}
    B = -2 A_{irr}V_{\Sigma}.
\end{equation}

It is possible to show that this condition is equivalent to our first detailed-balance condition in \cref{eq: cond_DB_main} when one chooses \(\Sigma = V_S\), where \(V_S\) is the covariance matrix of the steady state. However, \cref{eq: cond_DB_main} also requires
${\bm \calE}{\bm B}{\bm \calE}^T = {\bm B}$,
which implies that micro-reversibility is automatically broken whenever \(D_{qp}\neq 0\). Aside from this point, for all the other models considered---where \(D_{qp}\) was already set to zero---this second condition plays no relevant role. Therefore, in those cases, our notion of DB and the KMS one are in fact fully equivalent for markovian, gaussian dynamics. This shows that the conclusions we reached remain robust under a change in the definition of the DB condition.

\end{document}